\documentclass[11pt,a4paper]{article}
\pdfoutput=1
\usepackage{graphicx}
\usepackage{jheppub}

\interfootnotelinepenalty=10000

\usepackage{amsmath,amssymb,mathtools}
\usepackage{amsthm}
\usepackage{nccmath}
\usepackage[compatibility=false]{caption} 
\usepackage{subcaption} 
\usepackage{booktabs}
\usepackage{multirow}

\newtheorem{thm}{Theorem}
\newtheorem{conj}{Conjecture}

\def\beq{\begin{equation}}
\def\eeq{\end{equation}}
\def\bea{\begin{eqnarray}}
\def\eea{\end{eqnarray}}
\def\beu{\begin{quotation}}
\def\eeu{\end{quotation}}

\begin{document}

\title{On the Causal Set-Continuum Correspondence}

\author[a,b]{Mehdi Saravani,}
\author[a,b]{Siavash Aslanbeigi}
\affiliation[a]{Perimeter Institute for Theoretical Physics, 31 Caroline St. N., Waterloo, ON, N2L 2Y5, Canada}
\affiliation[b]{Department of Physics and Astronomy, University of Waterloo, Waterloo, ON, N2L 3G1, Canada}
\emailAdd{msaravani@perimeterinstitute.ca}
\emailAdd{saslanbeigi@perimeterinstitute.ca}

\abstract{
We present two results which concern certain aspects of the question: when is a causal set well approximated by a Lorentzian manifold?  The first result is a theorem which shows that the number-volume correspondence, if required to hold even for arbitrarily small regions, is best realized via 
Poisson sprinkling. The second result concerns a family of lattices in $1+1$ dimensional Minkowski space, known as Lorentzian lattices, 
which we show provide a much better number-volume correspondence than Poisson sprinkling for large volumes. We argue, however, that this feature should not persist 
in higher dimensions. We conclude by conjecturing a form of the aforementioned theorem that holds under weaker assumptions, namely that
Poisson sprinkling provides the best number-volume correspondence in $3+1$ dimensions for spacetime regions with macroscopically large volumes.
}

\maketitle
\flushbottom

\section{Background}
From the viewpoint of causal set theory, the continuum spacetime of general relativity is only fundamental to 
the extent that it provides a good approximation to an underlying causal set \cite{Sorkin_1,Sorkin_2,Sorkin_3,Dowker_1,Henson}. 
Once a full dynamical theory of causal sets is available, it is necessary to judge whether or not the result of evolution looks anything like
the universe we observe at low energies. Therefore, criteria must be established to determine how well a Lorentzian geometry $(M,g)$ approximates a causal set $(C,\prec)$. 
\footnote{
A causal set (causet) is a set $C$ endowed with a binary relation $\prec$ such that for all $x, y, z \in  C$ the following axioms are satisfied:
(1) transitivity: $x\prec y$ $\&$ $y\prec z \Rightarrow x\prec z$, (2) irreflexivity: $x \nprec x$, (3):  local finiteness: $|\{y \in C|x \prec y \prec z\}| < \infty$. 
}
One natural criterion is to require the existence of an injective map $f:C\to M$ which preserves causal relations: $\forall$ $x,y\in C$, $x\prec y$ if and only if $f(x)\in J^{-}(y)$,
where $ J^{-}(y)$ is the set of all points in $M$ which lie in the causal past of $y$.
We would then say that $C$ is \emph{embeddable} in $M$.
Of course, it is not very likely for a causal set which has emerged out of the dynamics to be \emph{exactly} embeddable in \emph{any} spacetime. 
Close to the discreteness scale, for instance, one would expect the causal set to be fairly chaotic. 
Therefore, a certain degree of \emph{coarse graining} must be done before embedding is possible. 
It might also be necessary to introduce some notion of approximate embedding, because matching \emph{all} causal relations exactly (and there would be a lot of them)
seems too stringent a requirement. 
Once these issues are settled and embedding is possible, one last piece of information is required: scale. 
This is because preserving causal relations cannot distinguish between spacetimes
whose metrics are conformally related.
Causal sets contain information about scale implicitly through counting of elements, because they are locally finite (i.e. discrete).
To make use of this property, one also requires a \emph{number-volume (N-V) correspondence}:
the number $N_S$ of embedded points in any spacetime region $S\subset M$ should ``reflect" its volume $V_S$:
\beq
N_S\approx\rho V_S=\rho\int_{S}\sqrt{-g(x)}d^Dx,
\eeq
where $\rho$ is a constant, thought to be set by the Planck scale, which represents the number density of points. 
Of course, this correspondence cannot be \emph{exactly} true, 
the most obvious reason being that $\rho V_S$ is not always an integer. 
Also, for any embedding, there would always be infinitely many empty regions meandering through the embedded points. 
These issues can be addressed by first settling on the types of ``test regions" $S$, and then requiring the correspondence in a statistical sense.
To do so, let us first note that 
the causal set-continuum correspondence is only physically meaningful on scales 
much larger than the discreteness scale. Therefore, $S$ should be a region whose spacetime volume is much larger than that set by the discreteness scale.
The shape of $S$ can be picked to disallow regions that meander through the embedded points but have large volumes. 
A natural choice, given that spacetime is Lorentzian, is the causal interval $I(x,y)$: given any two timelike points $x\prec y\in M$, $I(x,y)$ is the collection of
all points in the causal future of $x$ and the causal past of $y$.  
Having decided on the types of test regions, the number-volume correspondence can
be formulated as follows: pick at random $M$ causal intervals $S_1,S_2,\dots,S_M$ with the same volume $V\gg \rho^{-1}$, and let
$N_1,N_2,\dots,N_M$ be the number of embedded elements in these regions, respectively. We then require that as $M\to\infty$:
\beq
\langle N\rangle{=}\rho V, \qquad
\frac{\delta N}{\langle N\rangle}=\frac{\sqrt{\langle(N-\langle N\rangle)^2\rangle}}{\langle N\rangle}\ll 1.
\label{NVreq}
\eeq
Having the $N$-$V$ formulation at hand,
\footnote{
It may seem more natural to require instead $|N_S-\rho V_S|\ll \rho V_S$ for all test regions $S$. This requirement, however, is a bit too stringent.
Even if there is only one region which violates this condition, the $N$-$V$ correspondence would be rendered unsatisfied. Requiring \eqref{NVreq}
ensures that \emph{almost all} regions have volumes representative of the number of embedded points in them.
}
the key question becomes: what is the map that 
realizes the number-volume correspondence with the least noise?

The attitude in the causal set program is that this mapping is best done through \emph{Poisson sprinkling}.
In this approach, one first reverses direction by obtaining a causal set $C(M)$ from a given spacetime $(M,g)$: randomly select points from $M$ using 
the Poisson process at density $\rho$ and endow the selected points with their causal relations. The probability of selecting $n$ points from a region with volume $V$ is 
 \footnote{
The Poisson process can be obtained by dividing spacetime into small regions of volume $dV$
so that (i) in each infinitesimal region one point can be selected at most, and (ii) this selection
happens with the probability $\rho dV$ \textit{independent} of outside regions. Then, the probability of
selecting $n$ points in a volume $V$ is $P(n) = \binom{V/dV}{n}(\rho dV )^n(1-\rho dV )^{V/dV-n}$, which converges to \eqref{poisson} in 
the limit $dV \to 0$. 
}
\beq
P(n)=\frac{(\rho V)^ne^{-\rho V}}{n!}.
\label{poisson}
\eeq
Both the expectation value and variance of the number of selected points in a region with volume $V$ is equal to $\rho V$:
\beq
\langle N\rangle_{Pois}{=}\rho V, \qquad
\frac{\delta N_{Pois}}{\langle N\rangle_{Pois}}=\frac{1}{\sqrt{\rho V}}.
\eeq
The causal set-continuum correspondence is then judged as follows: \emph{a Lorenztian manifold $(M,g)$ is well-approximated by a causal set $C$ if and only if $C$ could have arisen from a sprinkling of $(M,g)$ with ``high probability"}. This definition is consistent with the $N$-$V$ requirement formulated above: if $C$ is embeddable as a ``large enough" sprinkling of $(M,g)$, 
\eqref{NVreq} would be satisfied because of the ergodic nature of the Poisson process. 
The ``high probability" requirement is necessary to ensure that a large enough sprinkling is indeed obtained.
Ultimately, one needs to decide how high ``high probability" is. A practical meaning could be that observables (such as dimension, proper time, etc)
are not too wildly far from their mean \cite{Henson}. It is interesting to note that \emph{any} embeddable $C$ has a finite probability of being realized through a Poisson sprinkling. 
This formulation of the causal set-continuum correspondence can be used for any point process (i.e. not just Poisson) which satisfies the $N$-$V$ requirement on average.

Poisson sprinkling has many desirable features. 
It has been proven that even its realizations do not select a preferred frame in Minkowski space \cite{Bombelli}. 
If this mapping really does provide the best causal set-continuum dictionary, it is intriguing that Lorentz invariance should follow as a
biproduct. Also, Poisson sprinkling works in \emph{any} curved background. 
Even the extra requirement of the shape of test regions as causal intervals is not necessary in this context. 
On the way to proving that 
the causal set structure is in principle rich enough to give rise to a smooth Lorentzian manifold, 
Poisson sprinkling has played a central role. But is it unique?

This paper contains two results which (we hope) shed some light on certain aspects of this question. 
The first result is that the number-volume correspondence, if required to hold even for arbitrarily small regions, is best realized via 
Poisson sprinkling.
The second result concerns a family of lattices in $1+1$-dimensional Minkowski space, known as Lorentzian lattices, 
which we show provide a better number-volume correspondence than Poisson sprinkling for large volumes. 
\footnote{
The existence of Lorentzian lattices in $1+1$-dimensional Minkowski space, and that they might be a contender for the Poisson process, was suggested by Aron Wall 
to Rafael Sorkin, who then mentioned it to us. 
}
We argue, however, that this feature should not persist 
in higher dimensions and that it is special to $1+1$-dimensional Lorentzian lattices. 
We conclude by conjecturing that
Poisson sprinkling provides the best number-volume correspondence in $3+1$ dimensions for spacetime regions with macroscopically large volumes.
\section{Nothing beats Poisson for Planckian volumes}
In this Section we prove that the number-volume correspondence is best realized via Poisson sprinkling for arbitrarily small volumes.
We set $\rho=1$ in the statement and proof of the theorem. 

\begin{thm}
Let $\xi$ be a point process whose realizations are points of a smooth Lorentzian manifold $(M,g)$. 
Let $N_S$ be the random variable which counts the number of points in a causal interval $S\subset M$: it takes on a value $n\in\{0,1,2,\dots\}$ with probability $P_S(n)$.
Assume also
that $\xi$ realizes the number-volume correspondence on average $\forall$ $S$: $\langle N_S\rangle=\sum_{n=0}^{\infty}nP_S(n)=V_S$, where
$V_S$ is the spacetime volume of $S$. 
Then, $\nexists$ $\xi$ such that $\forall$ $S$:
\beq
\langle(N_S-V_S)^2\rangle\le\alpha V_S \quad
\text{where} \qquad
0\le\alpha<1.
\label{poisThm}
\eeq
\end{thm}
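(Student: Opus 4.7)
The plan is to derive a contradiction by examining causal intervals $S$ of very small volume. Fix such an $S$ with $V_S = V$, and write $p_n = P_S(n)$ for brevity. The two given constraints become $\sum_{n\ge 0} p_n = 1$ and $\sum_{n\ge 0} n\, p_n = V$, while the hypothesis \eqref{poisThm} translates into a bound on the second moment.

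First I would rewrite the variance using the mean constraint:
\beq
\langle (N_S - V)^2 \rangle = \sum_{n\ge 0} n^2 p_n - 2V \sum_{n\ge 0} n p_n + V^2 = \sum_{n\ge 0} n^2 p_n - V^2.
\eeq
The key observation is that since $n$ ranges over non-negative integers, we have $n^2 \ge n$ for every term, with equality only at $n \in \{0,1\}$. Consequently
\beq
\sum_{n\ge 0} n^2 p_n \;\ge\; \sum_{n\ge 0} n\, p_n \;=\; V,
\eeq
and hence $\langle (N_S - V)^2 \rangle \ge V - V^2$. Combining this lower bound with the assumed upper bound $\langle (N_S-V)^2\rangle \le \alpha V$ gives $V - V^2 \le \alpha V$, i.e., $V \ge 1-\alpha$.

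To finish, I would invoke the fact that in a smooth Lorentzian manifold one can always find causal intervals of arbitrarily small spacetime volume (e.g., by taking $I(x,y)$ with $y$ sufficiently close to $x$ along a timelike curve). Picking any such $S$ with $V_S < 1-\alpha$ directly contradicts $V \ge 1-\alpha$, which is what we wanted.

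The argument is elementary once the second-moment identity is written down; the only real obstacle is recognizing that the inequality $n^2 \ge n$ is exactly the right tool, and that this inequality becomes an equality precisely when the process is supported on $\{0,1\}$ per infinitesimal region. That observation is morally why Poisson (which saturates $\langle N^2\rangle - V^2 = V$ in the $V\to 0$ limit, with $P(n\ge 2) = O(V^2)$) is optimal: any deviation that places mass on $n\ge 2$ in small regions increases the variance above $V$, and one cannot compensate by placing less mass there because the mean is already fixed.
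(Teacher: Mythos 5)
Your proof is correct, and it reaches the contradiction by the same basic mechanism as the paper (causal intervals of volume $V_S<1-\alpha$ force the variance above $\alpha V_S$), but your key lemma is obtained by a genuinely simpler route. The paper devotes an appendix to proving the sharp lower bound $\langle(N_S-V_S)^2\rangle\ge(V_S-n_*)(n_*+1-V_S)$ for an arbitrary nonnegative-integer-valued random variable with mean $V_S$, where $n_*=\lfloor V_S\rfloor$; the proof there splits the probability mass around $n_*$ and manipulates the partial sums $A_V$ and $B_V$. You instead observe that $n^2\ge n$ for all nonnegative integers, so $\langle N_S^2\rangle\ge\langle N_S\rangle=V_S$ and hence $\langle(N_S-V_S)^2\rangle\ge V_S-V_S^2$. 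In the only regime the theorem actually needs, namely $V_S<1$ where $n_*=0$, your bound $V_S(1-V_S)$ coincides exactly with the paper's sharp bound, so nothing is lost; for $V_S\ge1$ your bound degenerates to something nonpositive while the paper's remains sharp, but that generality is never used in the proof of the theorem. The trade-off is clear: the paper's lemma is stronger and characterizes the minimum-variance process for every mean (which is of independent interest and motivates the intuition about two-point support), whereas your one-line second-moment argument is all that is required to establish the stated non-existence result. Your closing remark correctly identifies that equality in $n^2\ge n$ forces support on $\{0,1\}$, which is the same structural insight the paper extracts from its sharper lemma.
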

\begin{proof}
It is shown in Appendix \ref{saravProcess} that the variance of any random variable $N_S$ which takes on a value $n\in\{0,1,2,\dots\}$ with probability $P_S(n)$, and
whose mean is $V_S>0$,
must satisfy the inequality
\beq
\langle(N_S-V_S)^2\rangle \ge (V_S-n_*)(n_*+1-V_S),
\label{ineQ}
\eeq
where $n_*$ is the largest integer which is smaller than or equal to $V_S$. To see why this should be true, 
consider choosing $P_S(n)$ to obtain the least possible variance for $N_S$. Intuitively, this can be done by letting  
$P_S(n)=0$ $\forall$ $n\neq n_*,n_*+1$. Requiring $\langle N_S\rangle=V_S$ and $\sum_{n=0}^{\infty}P_S(n)=1$
then implies $P_S(n_*)=n_*+1-V_S$ and $P_S(n_*+1)=V_S-n_*$, which leads to the variance
$(V_S-n_*)(n_*+1-V_S)$. 
%
The formal proof of this result is given in Appendix \ref{saravProcess}. 

Let us now proceed to prove the theorem by contradiction. Assume there exists
$0\le\alpha<1$ such that $\langle(N_S-V_S)^2\rangle\le\alpha V_S$ for all $S$. It then follows from \eqref{ineQ} that
\beq
(V_S-n_*)(n_*+1-V_S)\le\alpha V_S \qquad \forall \qquad S.
\eeq
This, however, is clearly false because any region $S$ with $V_S<1-\alpha$ violates this condition.
\end{proof}

The proof of this theorem rests heavily on regions with Planckian volumes. For instance, had we required  
the condition \eqref{poisThm} for regions with $V_S>1$, the proof would not have gone through.
As we mentioned previously though, the causal set-continuum correspondence is only physically meaningful on scales 
much larger than the discreteness scale. 
In order to show that nothing \emph{really} beats Poisson, our result would have to be generalized to the case of larger volumes. 
We have, however, found a counter example to this conjecture in the case of $1+1$-dimensional Minkowski space. 
As we shall see in the next Section, 2D Lorentzian lattices realize the number-volume correspondence much better than Poisson sprinkling
for large volumes. 

\section{Lorentzian Lattices}

Why is a \emph{random}, as opposed to regular, embedding of points thought to provide the best number-volume correspondence?
Consider, for instance, a causal set which is embeddable as a regular lattice in $1+1$-dimensional Minkowski space. 
Our intuition from Euclidean geometry dictates that such a lattice should at least match, if not beat, a random sprinkling in uniformity. 
Why not, then, use a regular lattice as opposed to Poisson sprinkling? Figure \ref{fig:reglat} shows what goes wrong in Lorentzian signature.
Although the lattice is regular in one inertial frame, it is highly irregular for a boosted observer. 
Therefore, there are many empty regions with large volumes, which leads to a poor realization 
of the number-volume correspondence. 
\begin{figure}
	\begin{subfigure}[t]{0.52\hsize}
		\includegraphics[width=\hsize]{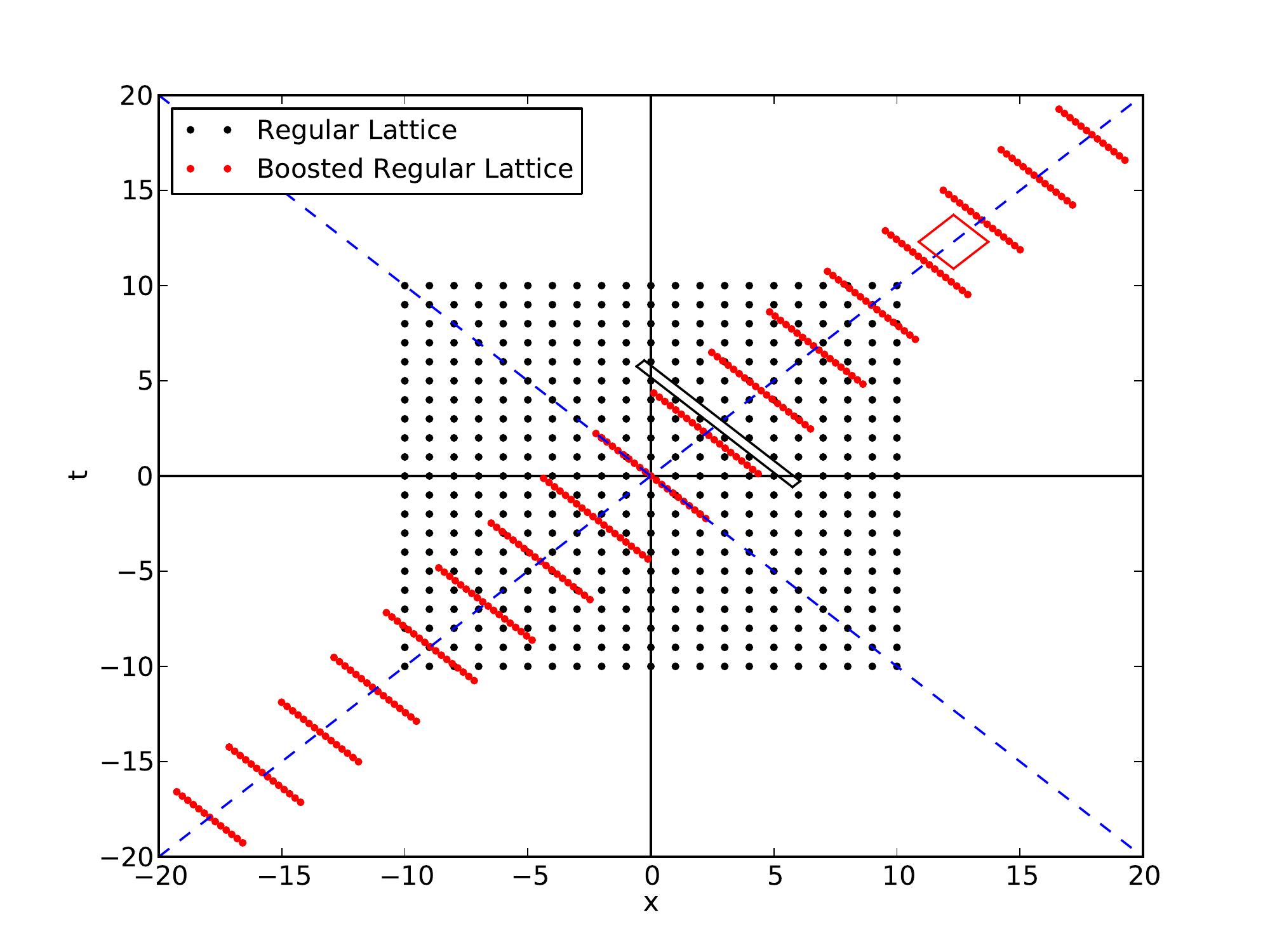}
		\caption{}
		\label{fig:reglat}
	\end{subfigure}%
	\begin{subfigure}[t]{0.52\hsize}
		\includegraphics[width=\hsize]{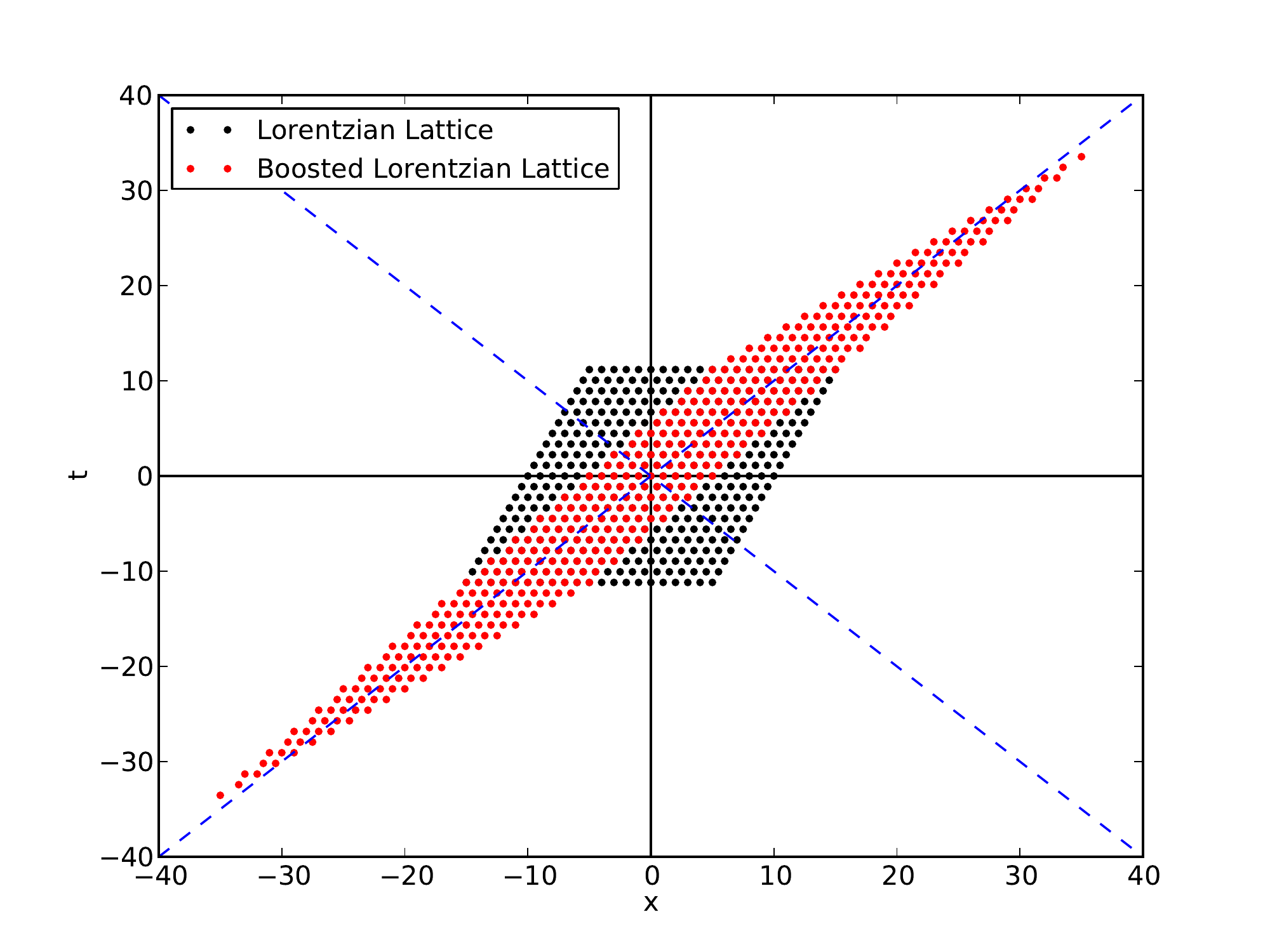}
		\caption{}
		\label{fig:lorlat}
	\end{subfigure}
	\caption{(a) The black dots show a lattice on the integers. The red dots are an active boost of this lattice  by velocity $v=\tanh(1.5)$. The red diamond is a causal interval in the boosted frame which contains no points. The black diamond is the same causal interval as seen in the original frame. (b) The black dots show a Lorentzian lattice generated by the timelike vector $\xi_{(0)}=(\sqrt{5}/2,1/2)$, and the spacelike vector $\xi_{(1)}=(0,1)$. The red dots are boosts of the Lorentzian lattice by $v=\sqrt{5}/3$, showing that this particular boost takes the lattice to itself.}
\end{figure}
Are there any regular lattices in $1+1$ that do not have this problem? As it turns out, the answer is yes: Lorentzian lattices. 
These are lattices which are invariant under a discrete subgroup of the Lorentz group. Such a lattice is shown in Figure \ref{fig:lorlat}:  it goes to itself 
under the action of a discrete set of boosts. We have classified all 2D Lorentzian lattices in Appendix \ref{LL2Ddetails}.
In the case of the integer lattice shown in Figure \ref{fig:reglat}, the more it is boosted, the more irregular it becomes. A Lorentzian lattice, however,
does not have this problem because it eventually goes to itself.
It is then reasonable to expect a better number-volume correspondence in this case.


We have investigated the $N$-$V$ correspondence for various Lorentzian lattices using simulations. Figure \ref{fig:LLNV} shows the result of one such analysis on the lattice shown in Figure\ref{fig:lorlat}. 
The setup is as follows: we consider $1000$ different causal diamonds with the same volume $V$, whose centres and shapes vary randomly throughout the lattice.
\footnote{
We made sure to include ``stretched out" causal diamonds, such as the black diamond shown in Figure \ref{fig:reglat}, as they are responsible for the poor realization of
the number-volume correspondence in the integer lattice.
}
For each realization, the number of lattice points inside the causal diamond is counted, leading to a
distribution of the number of points for a given volume $V$. This procedure is then repeated for different volumes. 
As it can be seen from Figure \ref{fig:LLNV}, the Lorentzian lattice shown in Figure \ref{fig:lorlat} realizes the
number-volume correspondence with much less noise than Poisson sprinkling for macroscopic volumes. In fact, Figure \ref{stat} shows that the dispersion about the mean is barely growing with volume at all. The same exercise with the integer lattice results in a huge dispersion, much larger than that of Poisson, which is to be expected. 

%
\begin{figure}
	\centering
	\begin{subfigure}[t]{0.52\hsize}
		\includegraphics[width=\hsize]{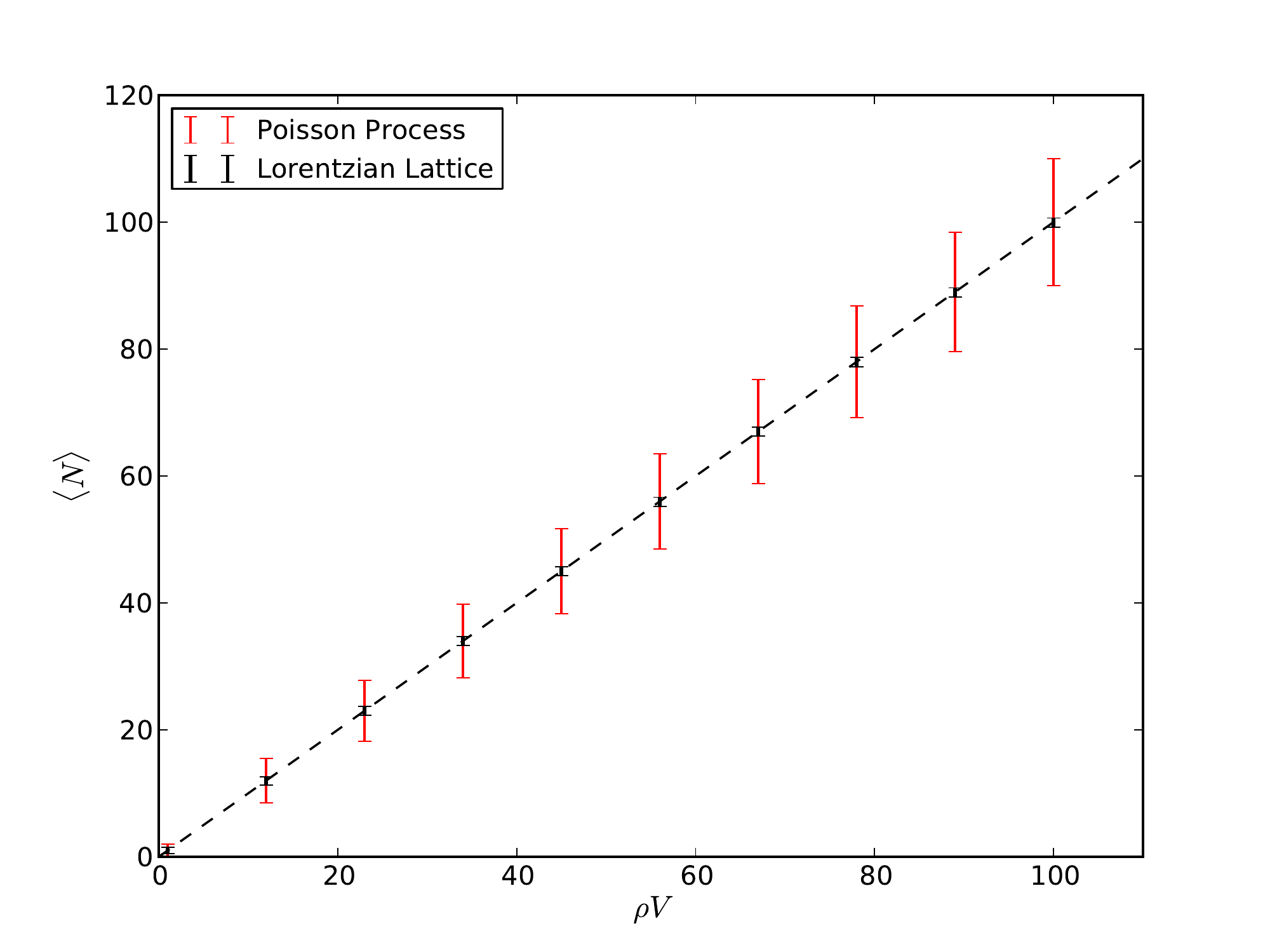}
		\caption{}
		\label{NV}
	\end{subfigure}%
	\begin{subfigure}[t]{0.52\hsize}
		\includegraphics[width=\hsize]{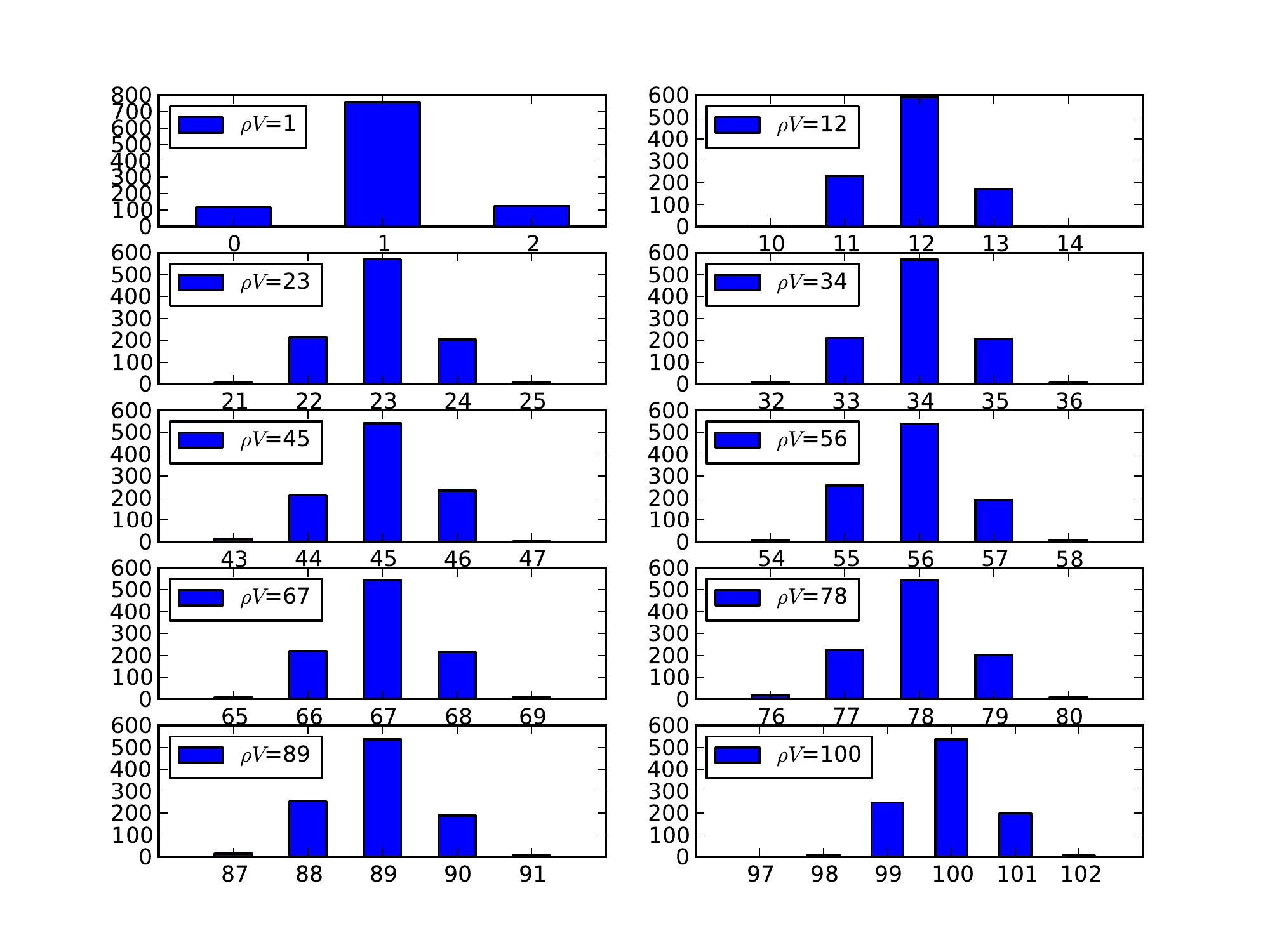}
		\caption{}
		\label{stat}
	\end{subfigure}
	\caption{The number-volume correspondence for the Lorentzian lattice shown in Figure \ref{fig:lorlat}. (a) The mean and standard deviation of the number of points. (b) The histogram of the number of points for different volumes.}
	\label{fig:LLNV}
\end{figure}

What about Lorentzian lattices in $3+1$ dimensions? Would they also realize the number-volume correspondence better than Poisson sprinkling? 
What is quite surprising is that the integer lattice \emph{is} a Lorentzian lattice in both $2+1$ and $3+1$ dimensions \cite{Schild}.
\footnote{
In $2+1$, for instance, the following boosts take the integer lattice to itself: $v_x=v_y=2/3$ and $v_x=18/35, v_y=6/7$.
}
We know from the $1+1$-dimensional integer lattice, however, that a boost along any spatial coordinate direction would create huge voids in any
higher-dimensional integer lattice. Therefore, one would expect a poor number-volume realization in this case.
We have confirmed this intuition for the $2+1$-dimensional integer lattice using simulations similar to those discussed previously (see Figure \ref{fig:LLNV-3D}).
What makes $1+1$-dimensional Minkowski space special
is that boosts can only be performed along one direction. Therefore, a Lorentzian lattice 
does not ``change" too drastically under the action of an arbitrary boost. 
This feature does not seem to persist in higher dimensions, which leads us to conclude that Lorentzian lattices in higher dimensions are not likely to
realize the number-volume correspondence better than Poisson sprinkling.

\begin{figure}
	\centering
	\includegraphics[width=0.7\hsize]{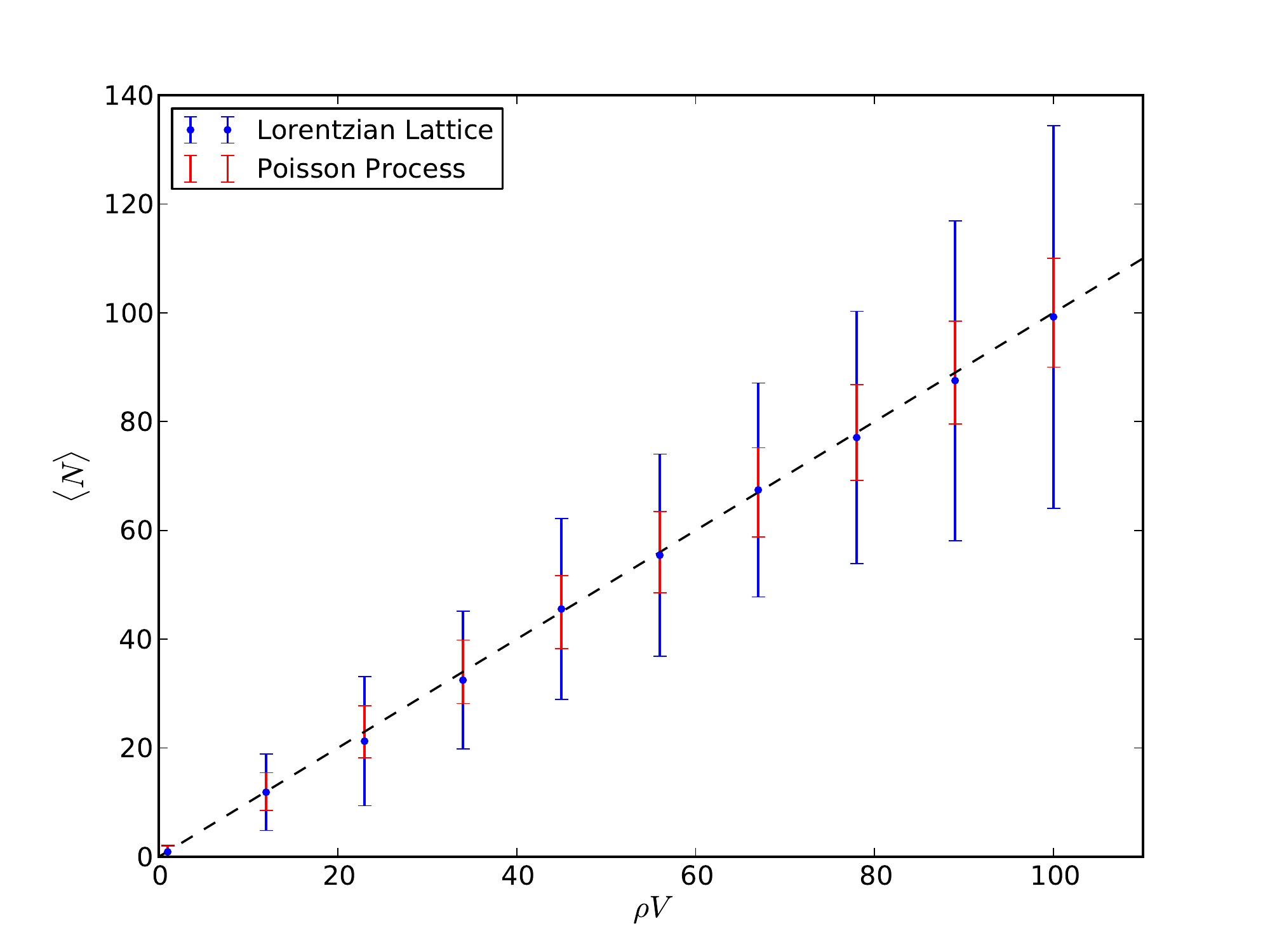}
	\caption{The number-volume correspondence for the $2+1$-dimensional integer lattice. For a given volume $V$, $200$ different causal diamonds with the same volume $V$ but randomly varying shapes are considered. The mean and standard deviation of the number of points (blue) is compared with that of the Poisson process (red).}
	\label{fig:LLNV-3D}
\end{figure}
\section{A Conjecture}
Based on the results of the previous Sections, we conjecture the following:
\begin{conj}
Let $\xi$ be a point process whose realizations are points of a $3+1$-dimensional smooth Lorentzian manifold $(M,g)$. 
Let $N_S$ be the random variable which counts the number of points in a causal interval $S\subset M$: it takes on a value $n\in\{0,1,2,\dots\}$ with probability $P_S(n)$.
Assume also
that $\xi$ realizes the number-volume correspondence on average $\forall$ $S$: $\langle N_S\rangle=V_S$, where
$V_S$ is the spacetime volume of $S$. 
Then, $\nexists$ $\xi$ and $V_*>0$ such that for all causal intervals $S$ with volume $V_S>V_*$:
\beq
\langle(N_S-V_S)^2\rangle\le\alpha V_S \quad
\text{where} \qquad
0\le\alpha<1.
\label{poisThm2}
\eeq
\end{conj}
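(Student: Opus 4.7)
The strategy is proof by contradiction. Assume some $\xi$ and $V_*>0$ satisfy the hypothesis. Writing the variance in terms of the truncated two-point correlation function $C$ of $\xi$,
\beq
\langle(N_S-V_S)^2\rangle = V_S + \int_{S}\!\int_{S} C(x,y)\, d^4x\, d^4y,
\eeq
the assumed bound is equivalent to the integrated anti-correlation requirement
\beq
\int_{S}\!\int_{S} C(x,y)\, d^4x\, d^4y \le -(1-\alpha)V_S \qquad \text{for all } S \text{ with } V_S>V_*.
\eeq
The plan is to exhibit a family of large causal intervals for which this inequality must fail.

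The family I would use exploits the fact that in $3+1$ dimensions the moduli space of causal intervals modulo translations is $4$-dimensional, whereas the Lorentzian-lattice counterexample of Section 3 exploits a discrete boost symmetry filling the \emph{one-dimensional} rapidity space of $1+1$. Concretely, fix a reference round causal interval $S_0$ with volume $V_0\gg V_*$, and for each unit spatial direction $\hat n$ and rapidity $\chi$ let $S_{\chi,\hat n}$ be its boosted image. When $\xi$ is not Lorentz invariant, choose $\hat n$ generic with respect to the discrete symmetries of $\xi$ (possible because any discrete subgroup of $SO(3,1)$ stabilises only a measure-zero set of boost 4-velocities). As $\chi\to\infty$, the transverse integration of $\xi$ over the two directions orthogonal to $\hat n$ should produce an effective $1+1$-dimensional point process inheriting no Lorentzian-lattice structure from $\xi$; an adaptation of Theorem 1 to that effective process would then force $\langle(N_{S_{\chi,\hat n}}-V_{S_{\chi,\hat n}})^2\rangle \ge V_{S_{\chi,\hat n}}$ at large $\chi$, contradicting the hypothesis.

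The hard part will be handling fully Lorentz-invariant $\xi$, for which no direction is generic and the condition reduces to $C(x,y)=F((x-y)^2)$ together with the spectral bound $\hat F(0)\le\alpha-1<0$, which is \emph{compatible} with the positivity constraint $1+\hat F(k)\ge 0$ for all $k\in\mathbb{R}^{1,3}$ at the level of pair statistics. A successful proof must therefore invoke either higher correlations --- to rule out the existence of a genuine positive point process with such a spectrum --- or a finer shape-dependent variance estimate that distinguishes thin, highly-boosted causal intervals from round ones of the same volume. I expect this step, namely ruling out hyperuniform Lorentz-invariant point processes in $3+1$ Minkowski space at the level of full configurations rather than just two-point functions, to be the technical core of the conjecture and the likely reason the authors state it only tentatively.
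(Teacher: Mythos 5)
The statement you are trying to prove is stated in the paper only as a \emph{conjecture}: the authors offer no proof, only the supporting evidence of Theorem 1 (which covers arbitrarily small volumes), simulations of the $2+1$ integer lattice, and the heuristic that the one-dimensionality of the boost group is what makes the $1+1$ Lorentzian-lattice counterexample work. So there is no proof in the paper to compare yours against, and the real question is whether your sketch closes the gap. It does not. Your variance identity $\langle(N_S-V_S)^2\rangle = V_S + \int_S\int_S C(x,y)\,d^4x\,d^4y$ and the resulting anti-correlation requirement are correct and a sensible starting point, and you correctly identify the Lorentz-invariant case as the hard core --- you candidly leave it open, which alone means the proposal is not a proof.

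But the non-Lorentz-invariant branch also has a genuine gap. Your plan is to boost a round interval in a generic direction, integrate out the two transverse directions to get an ``effective $1+1$-dimensional point process,'' and then apply ``an adaptation of Theorem 1'' to conclude the variance is at least $V_S$ for large $\chi$. Theorem 1, however, is proved entirely from regions with $V_S<1-\alpha$; it says nothing about large volumes, and the paper's own Section 3 shows that in $1+1$ dimensions a point process \emph{can} have variance far below $V$ for large $V$ (the Lorentzian lattice, whose dispersion barely grows with volume). So reducing to an effective $1+1$ process buys you nothing unless you can prove that this effective process cannot be hyperuniform --- and ``inherits no Lorentzian-lattice structure'' is not such a proof, since lattices are not the only hyperuniform processes. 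In other words, the step ``adaptation of Theorem 1 would then force $\langle(N-V)^2\rangle\ge V$'' is precisely the open problem restated, not an argument. What your write-up does contribute is a useful reformulation (the spectral condition $\hat F(0)\le\alpha-1$ and its compatibility with pair-level positivity), which clarifies why the conjecture cannot be settled at the level of two-point functions alone; but as it stands the proposal is a research program, not a proof, and both of its branches terminate in unproven assertions.
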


\section{Conclusions}
Causal set theory maintains that all information about the continuum spacetime of general relativity is contained
microscopically in a partially order and locally finite set. Discreteness allows one to count elements, which is thought to
provide information about scale: a spacetime region with volume $V$ should contain about $\rho V$ causal set elements.
In this paper, we proved a theorem which shows that this number-volume correspondence is best realized via 
Poisson sprinkling for arbitrarily small volumes. Quite surprisingly, we also showed that $1+1$-dimensional Lorentzian lattices
provide a much better number-volume correspondence than Poisson sprinkling for large volumes. We presented evidence, however, that this feature should not persist 
in $3+1$ dimensions and conjectured that the Poisson process should indeed provide the best number-volume correspondence for macroscopically large spacetime regions.

\acknowledgments
We are indebted to Rafael Sorkin and Niayesh Afshordi for many useful discussions throughout the course of this project. 
We thank Rafael Sorkin also for providing detailed comments on an earlier draft of our paper.
This research was supported in part by Perimeter Institute for Theoretical Physics. Research at Perimeter Institute is supported by the Government of Canada through Industry Canada and by the Province of Ontario through the Ministry of Research and Innovation.

\bibliographystyle{jhep}
\bibliography{Poisson}

\providecommand{\href}[2]{#2}\begingroup\raggedright\begin{thebibliography}{1}

\bibitem{Sorkin_1}
L.~{Bombelli}, J.~{Lee}, D.~{Meyer}, and R.~D. {Sorkin}, {\it {Space-time as a
  causal set}},  {\em Physical Review Letters} {\bf 59} (Aug., 1987) 521--524.

\bibitem{Sorkin_2}
R.~D. {Sorkin}, {\it {Spacetime and causal sets.}},  in {\em Relativity and
  Gravitation} (J.~C. {D'Olivo}, E.~{Nahmad-Achar}, M.~{Rosenbaum}, M.~P.
  {Ryan}, Jr., L.~F. {Urrutia}, and F.~{Zertuche}, eds.), p.~150, 1991.

\bibitem{Sorkin_3}
R.~D. {Sorkin}, {\it {Causal Sets: Discrete Gravity (Notes for the Valdivia
  Summer School)}},  {\em ArXiv General Relativity and Quantum Cosmology
  e-prints} (Sept., 2003) [\href{http://xxx.lanl.gov/abs/gr-qc/0309009}{{\tt
  gr-qc/0309009}}].

\bibitem{Dowker_1}
F.~{Dowker}, {\it {Causal Sets and Discrete Spacetime}},  in {\em Albert
  Einstein Century International Conference} (J.-M. {Alimi} and A.~{F{\"u}zfa},
  eds.), vol.~861 of {\em American Institute of Physics Conference Series},
  pp.~79--88, Nov., 2006.

\bibitem{Henson}
J.~{Henson}, {\it {The causal set approach to quantum gravity}},  {\em ArXiv
  General Relativity and Quantum Cosmology e-prints} (Jan., 2006)
  [\href{http://xxx.lanl.gov/abs/gr-qc/0601121}{{\tt gr-qc/0601121}}].

\bibitem{Bombelli}
L.~{Bombelli}, J.~{Henson}, and R.~D. {Sorkin}, {\it {Discreteness Without
  Symmetry Breaking:. a Theorem}},  {\em Modern Physics Letters A} {\bf 24}
  (2009) 2579--2587, [\href{http://xxx.lanl.gov/abs/gr-qc/0605006}{{\tt
  gr-qc/0605006}}].

\bibitem{Schild}
A.~Schild, {\it Discrete space-time and integral lorentz transformations},
  {\em Phys. Rev.} {\bf 73} (Feb, 1948) 414--415.

\end{thebibliography}\endgroup

\newpage
\appendix
\section{Proof of Inequality \eqref{ineQ}}
\label{saravProcess}
\begin{thm}
Let $N_V$ be a discrete random variable which takes on a value $n\in\{0,1,2,\cdots\}$ with probability $P_V(n)$, and whose mean is 
$V>0$: 
\beq
\langle N_V\rangle=\sum_{n=0}^{\infty}nP_V(n)=V.
\eeq
$N_V$ has the least variance when $P_V(n)=0$ $\forall$ $n\neq n_*,n_*+1$, where $n_*$ is the largest integer which is smaller than or equal to $V$.
Equivalently:
\beq
\langle(N_V-V)^2\rangle \ge (V-n_*)(n_*+1-V),
\eeq
where the inequality is saturated for the aforementioned process.
\end{thm}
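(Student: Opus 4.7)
My plan is to reduce the problem to a statement about second moments and then exploit an elementary pointwise inequality that linearises $n^2$ between the two integers bracketing $V$.

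First, I would rewrite the variance as $\langle (N_V - V)^2\rangle = \langle N_V^2\rangle - V^2$ using the hypothesis $\langle N_V\rangle = V$. Minimising the variance over admissible distributions then reduces to finding the tightest linear lower bound on $n^2$ that holds for all non-negative integers $n$, because taking expectations of a linear function of $N_V$ only uses $\sum P_V(n) = 1$ and $\sum n P_V(n) = V$, which are fixed.

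The key observation is that the linear interpolant through $(n_*, n_*^2)$ and $(n_*+1,(n_*+1)^2)$ is $L(n) = (2n_*+1)n - n_*(n_*+1)$, and
\begin{equation}
n^2 - L(n) = (n - n_*)(n - n_* - 1).
\end{equation}
Since no integer lies strictly between the consecutive integers $n_*$ and $n_*+1$, both factors have the same sign for every non-negative integer $n$, so $n^2 \ge L(n)$ for all $n \in \{0,1,2,\dots\}$, with equality iff $n \in \{n_*, n_*+1\}$. Multiplying by $P_V(n)$ and summing yields
\begin{equation}
\langle N_V^2 \rangle \ge (2n_*+1) V - n_*(n_*+1),
\end{equation}
and subtracting $V^2$ gives $\langle (N_V - V)^2 \rangle \ge -V^2 + (2n_*+1)V - n_*(n_*+1) = (V-n_*)(n_*+1-V)$, which is the claimed bound.

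For the equality statement, saturation in the summed inequality forces $P_V(n) = 0$ whenever $(n-n_*)(n-n_*-1) > 0$, i.e.\ for every $n \notin \{n_*, n_*+1\}$. Then the two constraints $P_V(n_*) + P_V(n_*+1) = 1$ and $n_* P_V(n_*) + (n_*+1) P_V(n_*+1) = V$ uniquely fix $P_V(n_*) = n_* + 1 - V$ and $P_V(n_*+1) = V - n_*$, which are both non-negative since $n_* \le V \le n_*+1$; a quick check recovers the variance $(V-n_*)(n_*+1-V)$, confirming that the bound is attained. There is no real obstacle here; the only ``creative'' step is spotting the linear interpolant $L(n)$, and the proof is otherwise essentially a one-line application of linearity of expectation.
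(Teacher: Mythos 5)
Your proof is correct, and it takes a genuinely different (and shorter) route than the paper's. The paper works with the quantities $A_V=\sum_{n\le n_*}P_V(n)$ and $B_V=\sum_{n\le n_*}P_V(n)(V-n)$, first establishing $B_V\ge(V-n_*)(n_*+1-V)$ by sandwiching $A_V$ between two bounds derived from the two ways of writing $B_V$, and then connecting the variance to $B_V$ through the pointwise inequality $(n-V)^2>|n-V|$ valid for $n\notin\{n_*,n_*+1\}$, finishing with $P_V(n_*)+P_V(n_*+1)\le 1$. You instead observe that the secant line $L(n)=(2n_*+1)n-n_*(n_*+1)$ through $(n_*,n_*^2)$ and $(n_*+1,(n_*+1)^2)$ minorizes $n^2$ on the integers, since $n^2-L(n)=(n-n_*)(n-n_*-1)\ge 0$ with equality exactly at $n_*$ and $n_*+1$, and then apply linearity of expectation once; this is the standard ``supporting line'' argument and it delivers both the bound and the characterization of the minimizer in one stroke. (The absolute convergence of $\sum P_V(n)L(n)$ is guaranteed by the finiteness of the mean, so the term-by-term comparison is legitimate.) What each approach buys: yours is more economical and makes the equality case transparent; the paper's detour through $B_V$ incidentally establishes the separate fact that the mean absolute deviation $2B_V$ is itself bounded below by $2(V-n_*)(n_*+1-V)$, which is a small bonus but not needed for the theorem.
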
 
\begin{proof}
The following three conditions must be true 
\begin{align}
\sum_{n=0}^{\infty}P_V(n)&=1,\label{sumOne}\\
\sum_{n=0}^{\infty}P_V(n)n&=V,\label{sumMean}\\\
0\le P_V(n)&\le1 \qquad\forall\qquad n.
\end{align}
We denote the random variable which we claim has the least variance by $N^{m}_V$, and its probability mass function
by $P^{m}_V$. 
It follows from \eqref{sumOne} and \eqref{sumMean} that
\beq
P^m_V(n_*)=n_*+1-V, \qquad
P^m_V(n_*+1)=V-n_*, \qquad
\langle(N^m_V-V)^2\rangle = (V-n_*)(n_*+1-V).
\eeq
Let us now show that for any other probability mass function $P_V(n)$:
\beq
\sigma_V^2\equiv\sum_{n=0}^{\infty}P_V(n)(n-V)^2\ge(V-n_*)(n_*+1-V).
\eeq
To this end, we define the following
\begin{align}
A_V&\equiv\sum_{n=0}^{n_*}P_V(n), \\
B_V&\equiv\sum_{n=0}^{n_*}P_V(n)(V-n)=\sum_{n=n_*+1}^{\infty}P_V(n)(n-V),
\end{align}
where the last equality follows from \eqref{sumMean}. On the one hand,
\beq
B_V=\sum_{n=0}^{n_*}P_V(n)(V-n)\ge(V-n_*)\sum_{n=0}^{n_*}P_V(n)=A_V(V-n_*).
\label{BvIneq1}
\eeq
On the other hand,
\beq
B_V=\sum_{n=n_*+1}^{\infty}P_V(n)(n-V)\ge(n_*+1-V)\sum_{n=n_*+1}^{\infty}P_V(n)=(n_*+1-V)(1-A_V).
\label{BvIneq2}
\eeq
It then follows from \eqref{BvIneq1} and \eqref{BvIneq2} that
\beq
1-\frac{B_V}{n_*+1-V} \le A_V \le \frac{B_V}{V-n_*},
\eeq
which in turn implies that
\beq
B_V\ge(V-n_*)(n_*+1-V).
\label{Bineq}
\eeq
Consider now the variance:
\begin{align}
\sigma_V^2&=\sum_{n=0}^{n_*-1}P_V(n)(n-V)^2 + \sum_{n=n_*+2}^{\infty}P_V(n)(n-V)^2\notag\\
&+P_V(n_*)(V-n_*)^2+P_V(n_*+1)(n_*+1-V)^2.
\end{align}
For all $n\neq n_*, n_*+1$, $(n-V)^2>|V-n|$, from which it follows that
\begin{align}
\sigma_V^2&\ge\sum_{n=0}^{n_*-1}P_V(n)(V-n) + \sum_{n=n_*+2}^{\infty}P_V(n)(n-V)\notag\\
&+P_V(n_*)(V-n_*)^2+P_V(n_*+1)(n_*+1-V)^2\\
&=2B_V+(n_*-V)(n_*+1-V)\left[P_V(n_*)+P_V(n_*+1)\right].
\end{align}
The equality in the last line follows from recognizing that 
\beq
\sum_{n=n_*+2}^{\infty}P_V(n)(n-V)=\sum_{n=0}^{n_*+1}P_V(n)(V-n).
\eeq
Finally, using the inequality \eqref{Bineq}:
\begin{align}
\sigma_V^2&\ge2(V-n_*)(n_*+1-V)+(n_*-V)(n_*+1-V)\left[P_V(n_*)+P_V(n_*+1)\right]\\
&=(V-n_*)(n_*+1-V)\left[2-P_V(n_*)-P_V(n_*+1)\right]\\
&\ge(V-n_*)(n_*+1-V),
\end{align}
where the last inequality follows from the fact that $P_V(n_*)+P_V(n_*+1)\le1$. This concludes the proof of the theorem.
\end{proof}
\section{2D Lorentzian Lattices: Details}
\label{LL2Ddetails}
We wish to construct a lattice that is invariant under the action of a discrete subgroup of the Lorentz group. We shall work in $D$-dimensional Minkowski space
and use the metric signature $-++\cdots$. Consider $D$ vectors $\xi_{(d)}$, with $d\in\{0,1,2,\cdots,D-1\}$, which generate the lattice. In other words, any element of the
lattice $X$ can be written as
\beq
X=n^{(d)}\xi_{(d)},
\eeq
where $n^{(d)}$ are integers and the summation over $d$ is implicit. Let $\Lambda$ be an element of the Lorentz group. We require that for all points $X$ on the lattice, $\Lambda X$ is also a point on the lattice:
\beq
\Lambda X=n^{(d)}\Lambda\xi_{(d)}=m^{(d)}\xi_{(d)},
\label{lat2lat}
\eeq
where $m^{(d)}$ are integers. We may decompose $\Lambda\xi_{(d)}$ in the basis of the generators:
\beq
\Lambda\xi_{(d)}=A_{(d)}^{\phantom{(d)}(d')}\xi_{(d')},
\label{defA}
\eeq
where $A_{(d)}^{\phantom{(d)}(d')}$ are constants which depend on $\Lambda$ and $\xi_{(d)}$.
It then follows from \eqref{lat2lat} that
\beq
n^{(d)}A_{(d)}^{\phantom{(d)}(d')}=m^{(d')}.
\eeq
\emph{
Therefore, $A_{(d)}^{\phantom{(d)}(d')}$ must be an integer for all $d$ and $d'$ if our lattice is to be invariant under the action of $\Lambda$.
}
In order to compute $A$, we can``dot" both sides of \eqref{defA} by $\xi_{(d'')}$:
\beq
\Lambda\xi_{(d)}\cdot\xi_{(d'')}=A_{(d)}^{\phantom{(d)}(d')}\xi_{(d')}\cdot\xi_{(d'')}.
\eeq
Defining the matrices $B$ and $C$ as,
\beq
B_{(d)}^{\phantom{(d)}(d')}\equiv\xi_{(d)}\cdot\xi_{(d')}, \qquad
C_{(d)}^{\phantom{(d)}(d')}\equiv\Lambda\xi_{(d)}\cdot\xi_{(d')},
\label{defBC}
\eeq
it follows that
\beq
A=CB^{-1}.
\label{Aexpr}
\eeq

Consider now the case of $1+1$ Minkowski space, i.e. $D=2$. Let $\xi_{(0)}$ and $\xi_{(1)}$ be the timelike and spacelike generators:
\beq
\xi_{(0)}=\epsilon
\begin{pmatrix}
\cosh\psi\\ 
\sinh\psi
\end{pmatrix}
, \qquad
\xi_{(1)}=\delta
\begin{pmatrix}
\sinh\theta\\ 
\cosh\theta
\end{pmatrix},
\eeq
where $\epsilon, \delta>0$. Also, since in $1+1$ we only have boosts to consider:  
\beq
\Lambda=
 \begin{pmatrix}
  \cosh\phi&  \sinh\phi \\
  \sinh\phi&  \cosh\phi
 \end{pmatrix}.
\eeq
Defining the following quantities,
\beq
\gamma=\frac{\delta}{\epsilon}, \qquad
\chi=\psi-\theta,
\label{defpigam}
\eeq
it follows from \eqref{defBC} that
\beq
B=\epsilon^2
 \begin{pmatrix}
  -1&  \gamma\sinh\chi \\
  \gamma\sinh\chi&  \gamma^2
 \end{pmatrix},
\qquad
C=\epsilon^2
 \begin{pmatrix}
  -\cosh\phi&  \gamma\sinh(\phi+\chi) \\
  \gamma\sinh(\chi-\phi) &  \gamma^2\cosh\phi
 \end{pmatrix}.
\eeq
Using \eqref{Aexpr}:
\beq
A=\frac{1}{\cosh\chi}
 \begin{pmatrix}
  \cosh(\phi+\chi) &  \frac{1}{\gamma}\sinh\phi \\
  \gamma\sinh\phi &  \cosh(\phi-\chi)
 \end{pmatrix}.
\eeq
We need to pick $\phi, \chi$ and $\gamma$ so that all elements of $A$ are integers. Let $k_1-k_4$ be integers and require
\beq
\frac{\cosh(\phi+\chi)}{\cosh{\chi}}=k_1,\qquad
\frac{1}{\gamma}\frac{\sinh\phi}{\cosh{\chi}}=k_2, \qquad
\gamma\frac{\sinh\phi}{\cosh{\chi}}=k_3, \qquad
\frac{\cosh(\phi-\chi)}{\cosh{\chi}}=k_4.
\label{2Dintg}
\eeq
Note that
\beq
k_1,k_4>0, \qquad
\text{sgn}(k_2)=\text{sgn}(k_3).
\eeq
The second and third equations in \eqref{2Dintg} are equivalent to
\beq
\gamma^2=\frac{k_3}{k_2},\qquad
\frac{\sinh^2\phi}{\cosh^2\chi}=k_2k_3.
\label{equiv2D1}
\eeq
Also, the first and fourth equations in \eqref{2Dintg} imply
\beq
2\cosh\phi=k_1+k_4,\qquad
2\sinh\phi\tanh\chi=k_1-k_4.
\label{equiv2D2}
\eeq
The first equation in \eqref{equiv2D2} fixes $\phi$ up to a sign, using which the second equation in 
\eqref{equiv2D1} fixes $\chi$ up to a sign. Putting these together in the second equation in \eqref{equiv2D2},
we obtain the following constraint on the integers $k_1-k_4$:
\beq
k_1k_4-k_2k_3=1.
\eeq
This equation can be satisfied for various integers, and therefore there are many Lorentzian lattices in $1+1$. 

\emph{To summarize}: find integers $k_1-k_4$ that satisfy the conditions $(i)$ $k_1,k_4>0$, $(ii)$ sgn($k_2$)=sgn($k_3$), (iii) $k_1k_4-k_2k_3=1$. Then, if we let
$\cosh(\phi)=\frac{k_1+k_4}{2}$, $\gamma=\sqrt{k_3/k_2}$, and $\sinh(\chi)=\frac{k_1-k_4}{2\sqrt{k_2k_3}}$, the lattice generated by
$\xi_{(0)}$ and $\xi_{(1)}$ goes to itself under the action of $\Lambda(\phi)$, with $\psi,\theta,\delta,\epsilon$ satisfying \eqref{defpigam}.
Figure \ref{fig:lorlat} shows an example of a Lorentzian lattice with $k_1=2$, $k_2=k_3=k_4=1$, $\delta=1$, and $\theta=0$.

\end{document}